\title{Algorithmic Complexity of Isolate Secure Domination in Graphs}
\author{
        J. Pavan Kumar$ ^{1,*} $ \\[1pt]
            \and        P. Venkata Subba Reddy$ ^2 $\\[1pt]
        Department of Computer Science and Engineering\\
        National Institute of Technology\\
        Warangal, Telangana, India 506004. 
}
\date{}
\def\BState{\State\hskip-\ALG@thistlm}
\newtheorem{prop}{Proposition}
\newtheorem{Observation}{Observation}
\newtheorem{thm}{Theorem}
\newtheorem{cor}{Corollary}
\newtheorem{remark}{Remark}
\begin{document}
\maketitle

\begin{abstract}
A dominating set $S$ is an \emph{Isolate Dominating Set} (\textit{IDS}) if the induced subgraph $G[S]$ has at least one isolated vertex. In this paper, we initiate the study of new domination parameter called, \textit{isolate secure domination}. An isolate dominating set $S\subseteq V$ is an \textit{isolate secure dominating set} (\textit{ISDS}), if for each vertex $u \in V \setminus S$, there exists a neighboring vertex $v$ of $u$ in $S$ such that $(S \setminus \{v\}) \cup \{u\}$ is an IDS of $G$. The minimum cardinality of an ISDS of $G$ is called as an \emph{isolate secure domination number}, and is denoted by $\gamma_{0s}(G)$. Given a graph $ G=(V,E)$ and a positive integer $ k,$ the ISDM problem is to check whether $ G $ has an isolate secure dominating set of size at most $ k.$ We prove that ISDM is NP-complete even when restricted to bipartite graphs and split graphs. We also show that ISDM can be solved in linear time for graphs of bounded tree-width. 
\end{abstract}
\noindent\makebox[\linewidth]{\rule{0.65\paperwidth}{0.4pt}}
\textbf{Keywords}: Domination, NP-complete, Secure domination\\
\textbf{2010 Mathematics Subject Classification}: 05C69, 68Q25.\\
\noindent\makebox[\linewidth]{\rule{0.65\paperwidth}{0.4pt}}

\noindent * - Corresponding author\\
1 - jp.nitw@gmail.com\\
2 - pvsr@nitw.ac.in
\section{Introduction}
In this paper, every graph $G=(V,E)$ considered is finite, simple (i.e., without self-loops and multiple edges) and undirected with vertex set $V$ and edge set $E$.  
For a vertex $v \in V$, the (\textit{open}) \textit{neighborhood} of $v$ in $G$ is $N(v)= \{u \in V:(u,v) \in E$\}, the \textit{closed neighborhood} of $v$ is defined as $N[v]=N(v) \cup \{v\}$. If $S \subseteq V$, then the (open) neighborhood of $S$ is the set $N(S) = \cup_{v \in S} N(v)$. The closed neighborhood of $S$ is $N[S] = S \cup N(S)$. The \textit{degree} of a vertex $v$ is the size of the set $N(v)$ and is denoted by $d(v)$. If $d(v) = 0$, then $v$ is called an \textit{isolated vertex} of $G$. If $d(v) = 1$, then $v$ is called a \textit{pendant vertex}. For a graph $G=(V,E),$ and a set $S \subseteq V,$ the subgraph of $G$ \textit{induced} by $S$ is defined as $G[S]=(S,E_S)$, where $E_S=\{(x,y)$ $:$ $x,y \in S$ and $(x,y)\in E \}$. A \textit{spanning subgraph} is a subgraph that contains all the vertices of the graph. If $G[S]$ is a complete subgraph of $G$, then it is called a \textit{clique} of $G$. A set $S \subseteq V$ is an \textit{independent set} if $G[S]$ has no edges. A \textit{split graph} is a graph in which the vertices can be partitioned into a clique and an independent set. \par
In a graph $ G=(V,E) $, a set $S \subseteq V$ is a \textit{Dominating Set} (\textit{DS}) in $G$ if for every $u \in V \setminus S$, there exists $v \in S$ such that $(u,v) \in E$, i.e., $N[S] = V$. The minimum size of a dominating set in $G$ is called the \textit{domination number} of $G$ and is denoted by $\gamma(G)$. Given a graph $ G=(V,E)$ and a positive integer $ k,$ the domination decision (DOM) problem is to check whether $ G $ has a dominating set of size at most $ k.$ The DOM problem is known to be NP-complete \cite{garey}. The literature on various domination parameters in graphs has been surveyed in \cite{Haynes1,Haynes2}. An important domination parameter called secure domination has been introduced by E.J. Cockayne in \cite{pog}. A dominating set $S \subseteq V$ is a \textit{Secure Dominating Set} (\textit{SDS}) of $G$, if for each vertex $u \in V \setminus S$, there exists a neighboring vertex $v$ of $u$ in $S$ such that $(X \setminus \{v\}) \cup \{u\}$ is a dominating set of $G$ (in which case $v$ is said to \textit{defend} $u$). The minimum size of a SDS in $G$ is called the \textit{secure domination number} of $G$ and is denoted by $\gamma_s(G)$. Let $ S \subseteq V$. Then a vertex $ w \in V $ is called a private neighbor of $ v $ with respect to $ S $ if $ N[w] \cap S = \{v\}. $ If further $ w \in V \setminus S$, then $ w $ is called an \textit{external private neighbor (epn)} of $v$. The secure domination decision problem (SDOM) is known to be NP-complete for general graphs \cite{dev} and remains NP-complete even when restricted to bipartite graphs and split graphs \cite{osd}.\par Another domination parameter called isolate domination has been introduced by Hamid and Balamurugan in \cite{hamid}. A dominating set $S$ is an \emph{Isolate Dominating Set} (\textit{IDS}) if the induced subgraph $G[S]$ has at least one isolated vertex. The \textit{isolate domination number} $\gamma_0(G)$ is the minimum size of an IDS of $G$.  In \cite{rad}, N.J. Rad has proved that the isolate domination decision problem is NP-complete, even when restricted to bipartite graphs.  A dominating set $S$ is said to be a \textit{connected dominating set} (CDS), if the induced subgraph $G[S]$ is connected. A CDS $S$ is said to be a \textit{secure connected dominating set} (SCDS) in $G$ if for each $u \in V \setminus S$, there exists $v \in S$ such that  $uv \in E$ and $(S \setminus \{ v \})  \cup \{ u \} $ is a CDS in $G$. Algorithmic complexity of secure connected domination problem has been studied in \cite{jp1}. In this paper, we initiate the study of a variant of domination called \textit{isolate secure domination}. An isolate dominating set $S\subseteq V$ is an \textit{Isolate Secure Dominating Set} (\textit{ISDS}), if for each vertex $u \in V \setminus S$, there exists a neighboring vertex $v$ of $u$ in $S$ such that $(S \setminus \{v\}) \cup \{u\}$ is an IDS of $G$. The minimum size of an ISDS of $G$ is called as an \emph{isolate secure domination number}, and is denoted by $\gamma_{0s}(G)$. Given a graph $ G=(V,E)$ and a positive integer $ k,$ the ISDM problem is to check whether $ G $ has an isolate secure dominating set of size at most $ k.$ 
\paragraph{Motivation}A communication network is modeled as a graph $ G=(V, E) $ where each node represents a communicating device and each edge represents a communication link between two devices. All the nodes in the network need to communicate and exchange information to perform a task. However, in the networks where the reliability of the nodes is not guaranteed,  every node $ v $ can be a potential malicious node. A virtual protection device at node $ v $ can (i) detect the malicious activity at any node $ u $ in its closed neighborhood (ii) migrate to the malicious node $ u $ and repair it. One is interested to deploy minimum number of virtual protection devices such that every node should have at least one virtual protection device within one hop distance even after virtual protection  device is migrated to malicious node. This problem can be solved by finding a minimum secure dominating set of the graph $ G $. Further, if two virtual protection devices are deployed adjacently then there is a chance of one virtual protection device getting damaged or corrupted if other one is corrupted. 
In some scenarios it is desirable to have at least one virtual protection device isolated from all other devices before and after each potential migration to act as a backup device. This problem can be solved by finding a minimum isolate secure dominating set of the graph $ G $.

\section{Basic Results}
In this section, some precise values and bounds for new domination parameter called isolate secure domination in frequently encountered graph classes are presented. 
The graphs for which isolate secure dominating set exist the following observation holds.
\begin{Observation}\label{2}
	For a graph $ G $, $ \gamma_{0s}(G) \ge \gamma_{s}(G)$.
\end{Observation}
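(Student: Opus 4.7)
The plan is to show that every isolate secure dominating set is itself a secure dominating set, from which the inequality on the minimum cardinalities follows immediately. This reduces to unpacking the two definitions side by side and checking that the ISDS conditions are strictly stronger than the SDS conditions.

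First I would let $S$ be a minimum ISDS of $G$, so $|S|=\gamma_{0s}(G)$. By definition, $S$ is an IDS, which in particular means $S$ is a dominating set of $G$. Next, for each $u\in V\setminus S$, the ISDS property supplies a neighbor $v\in S$ such that $(S\setminus\{v\})\cup\{u\}$ is an IDS of $G$; since every IDS is in particular a DS, the same set $(S\setminus\{v\})\cup\{u\}$ is a dominating set of $G$. Therefore $S$ satisfies the defining condition of a secure dominating set, so $S$ is an SDS of $G$.

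Finally I would conclude that $\gamma_s(G)\le |S|=\gamma_{0s}(G)$, which is the desired inequality. There is no real obstacle here: the argument is a direct consequence of the fact that the ``isolate'' requirement is an extra constraint imposed on top of the dominating set requirement, both for $S$ itself and for each swapped set $(S\setminus\{v\})\cup\{u\}$, so the class of ISDS's is a subclass of the class of SDS's and minimization can only increase the numerical value.
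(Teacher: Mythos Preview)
Your argument is correct and is exactly the intended one: the paper states this as an Observation without proof, since it follows immediately from the definitions (every IDS is a DS, hence every ISDS is an SDS). Your write-up spells out precisely this implication and correctly concludes $\gamma_s(G)\le\gamma_{0s}(G)$ by minimality, with the implicit assumption (also made in the paper) that an ISDS of $G$ exists.
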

\begin{prop}\label{1}
	For the complete graph $ K_n $ with $ n $ vertices, $ \gamma_{0s}(K_n)=1.$
\end{prop}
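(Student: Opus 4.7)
The plan is to exhibit a singleton ISDS and observe that $1$ is a trivial lower bound. Pick any vertex $v \in V(K_n)$ and set $S=\{v\}$. The verification splits into three routine checks:

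First, I would argue that $S$ is a dominating set: since $K_n$ is complete, $v$ is adjacent to every other vertex, so $N[S]=V$. Second, $S$ is an IDS because $G[S]$ consists of the single vertex $v$, which is vacuously isolated in the induced subgraph. Third, for the security/defense condition, pick any $u \in V\setminus S$; the only candidate defender is $v$ itself, and $(S\setminus\{v\})\cup\{u\}=\{u\}$ is again a single vertex, hence (by the same two reasons) an IDS of $K_n$. This shows $\gamma_{0s}(K_n)\le 1$.

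For the matching lower bound I would simply note that any dominating set must be nonempty, so $\gamma_{0s}(K_n)\ge 1$, completing the equality.

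There is no real obstacle here: completeness of $K_n$ collapses both the domination requirement and the defense requirement into triviality, and the isolation condition is automatic for singletons. The only point worth stating carefully is that the definition of IDS is satisfied by a singleton set, since a lone vertex in $G[S]$ counts as an isolated vertex of $G[S]$.
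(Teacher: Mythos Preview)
Your argument is correct and complete: the singleton $\{v\}$ is dominating by completeness, isolated in $G[S]$ trivially, and swapping $v$ for any $u$ yields another singleton with the same properties, while nonemptiness gives the lower bound. The paper itself states this proposition without proof, and your write-up is exactly the routine verification one would expect.
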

\begin{thm}
	Let $ G $ be a graph with $ n $ vertices. Then $ \gamma_{0s}(G) =1$ if and only if $ G = K_n $.
\end{thm}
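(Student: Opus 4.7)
The plan is to prove the biconditional by handling the two directions separately, with the reverse direction being immediate from Proposition~\ref{1} and the forward direction being a short unpacking of the definition of ISDS applied to a singleton set.

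For the reverse direction, if $G = K_n$, then Proposition~\ref{1} directly gives $\gamma_{0s}(G) = 1$, so nothing further is needed.

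For the forward direction, I would assume $\gamma_{0s}(G) = 1$ and let $S = \{v\}$ be a minimum ISDS of $G$. First I would observe that since $S$ is (in particular) a dominating set and $|S|=1$, the vertex $v$ must be adjacent to every other vertex of $G$, so $d(v) = n-1$. Note that the induced subgraph $G[S]$ trivially has an isolated vertex (namely $v$ itself), so the isolate condition on $S$ is automatic. Next I would invoke the secure defense condition: for each $u \in V \setminus S$, there must exist a neighbor $v' \in S$ of $u$ such that $(S \setminus \{v'\}) \cup \{u\}$ is an IDS of $G$. Because $S = \{v\}$, this forced swap produces the set $\{u\}$, which in particular must dominate $G$. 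This forces $d(u) = n-1$ for every $u \in V \setminus S$ as well.

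Combining the two observations, every vertex of $G$ has degree $n-1$, so $G \cong K_n$. There is no real obstacle in this argument; the only subtlety worth stating explicitly is that the isolate-vertex requirement is vacuously satisfied on a singleton, which is why the proof reduces to the same kind of reasoning used for $\gamma_s(G) = 1 \Leftrightarrow G = K_n$.
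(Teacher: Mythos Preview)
Your proof is correct and follows essentially the same approach as the paper: both arguments hinge on the observation that for a singleton ISDS $S=\{v\}$, the defense condition forces each swapped set $\{u\}$ to dominate $G$, so every vertex is universal. The only cosmetic difference is that the paper phrases the forward direction by contradiction (pick non-adjacent $x,y$ if $G\neq K_n$), whereas you argue directly that all degrees equal $n-1$.
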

\begin{proof}
	Suppose $ \gamma_{os}(G) = 1 $ and let $ S = \{v\} $ be a minimum size ISDS of $ G $. Suppose $G \ne K_n$, then there exists $x, y \in V (G)$ such that $d(x, y) \ge 2$. Then $(S \setminus \{v\}) \cup \{x\} = \{x\}$, which is not a dominating set of $ G $, since $ (x, y) \notin E $. Therefore,
	$ G = K_n $. The converse is true by Proposition \ref{1}.
\end{proof}
\begin{remark} There is no isolate secure dominating set possible for the complete bipartite graph.
\end{remark}
\begin{prop}\label{3}
	Let $ P_n $ be a path graph with $ n$ $(\ge 4)$ vertices. Then $ \gamma_{0s}(P_n)=\lceil \frac{3n}{7} \rceil$.
\end{prop}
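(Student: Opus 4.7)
The plan is to establish $\gamma_{0s}(P_n) = \lceil 3n/7\rceil$ by sandwiching between matching lower and upper bounds. The lower bound is essentially free: Observation~\ref{2} gives $\gamma_{0s}(P_n) \ge \gamma_s(P_n)$, and the classical equality $\gamma_s(P_n) = \lceil 3n/7\rceil$ for $n \ge 4$ can be invoked. So the real work is to exhibit an ISDS of size $\lceil 3n/7\rceil$.

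For the upper bound I would build an explicit set by tiling. Writing $n = 7q + r$ with $0 \le r \le 6$, I would pick the pattern $\{v_{7i+2}, v_{7i+4}, v_{7i+6}\}$ inside each of the first $q$ blocks $v_{7i+1}, \ldots, v_{7i+7}$, and then append a small tail-specific set contributing $\lceil 3r/7\rceil$ further vertices (that is, $0, 1, 1, 2, 2, 3, 3$ for $r = 0, \ldots, 6$). A direct check on $P_7$ shows $\{v_2, v_4, v_6\}$ is itself an ISDS, and this serves as the prototype for the tiling argument.

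Verification then proceeds in three routine checks: (i) $S$ dominates every vertex, since each $v_j \notin S$ lies within distance one of a selected vertex in its block or in the tail; (ii) $G[S]$ is independent (both within each block and across block boundaries because $v_{7i+6}$ and $v_{7(i+1)+2} = v_{7i+9}$ are at distance three), so every element of $S$ is isolated and $S$ is an IDS; (iii) for each $u \notin S$ one must exhibit a defender $v \in S \cap N(u)$ such that $(S\setminus\{v\})\cup\{u\}$ is still an IDS. In step (iii) the defender is not arbitrary; for instance for $u = v_{7i+5}$ one must use $v_{7i+4}$ rather than $v_{7i+6}$, since removing $v_{7i+6}$ would leave $v_{7i+7}$ undominated.

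The main obstacle is the case analysis underlying step (iii), particularly in the tail where the pattern changes and the admissible defender choices depend on $r$. For each of the six nontrivial residues a small bespoke argument is needed to show that (a) the prescribed tail set is itself an ISDS of the induced tail path, and (b) it meshes with the last full block so that the junction vertices $v_{7q}$ and $v_{7q+1}$ remain dominated and some vertex of $S$ remains isolated after every defender swap near the junction. No individual sub-case is deep, but the bookkeeping across residues is where the proof spends most of its length.
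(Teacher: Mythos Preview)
Your proposal is correct and matches the paper's approach almost exactly: the paper also obtains the lower bound from Observation~\ref{2} together with $\gamma_s(P_n)=\lceil 3n/7\rceil$, and for the upper bound writes $n=7m+r$, takes $X=\bigcup_{i=0}^{m-1}\{v_{7i+2},v_{7i+4},v_{7i+6}\}$, and appends a residue-dependent tail $Y$ (namely $\emptyset$, $\{v_{7m+1}\}$, $\{v_{7m+1},v_{7m+3}\}$, or $\{v_{7m+1},v_{7m+3},v_{7m+5}\}$ according as $r=0$, $r\in\{1,2\}$, $r\in\{3,4\}$, or $r\in\{5,6\}$). Your verification sketch (independence of $S$, domination, and the careful defender choice such as using $v_{7i+4}$ rather than $v_{7i+6}$ for $u=v_{7i+5}$) is in fact more detailed than the paper, which simply asserts that $X\cup Y$ ``can be verified'' to be an ISDS.
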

\begin{proof}
	From \cite{pog}, we know that $ \gamma_s(P_n)= \lceil \frac{3n}{7} \rceil.$ By this and Proposition \ref{2}, we have $ \gamma_{0s}(P_n)$ is at least $\lceil \frac{3n}{7} \rceil.$ Hence in order to complete the proof, we need to exhibit an ISDS of $ P_n $ of size $\lceil \frac{3n}{7} \rceil$. 
	\begin{figure}[H]
		\begin{center}
			\begin{tikzpicture}
			\draw (0,0)--(6,0);
			\draw (1,0) circle (1.5mm);
			\draw (3,0) circle (1.5mm);
			\draw (5,0) circle (1.5mm);
			
			\node at (0,0){\textbullet};	\node at (-0.1,-0.2){a};
			\node at (1,0){\textbullet};	\node at (0.9,-0.3){b};
			\node at (2,0){\textbullet};	\node at (1.9,-0.2){c};
			\node at (3,0){\textbullet};	\node at (2.9,-0.3){d};
			\node at (4,0){\textbullet};	\node at (3.9,-0.2){e};
			\node at (5,0){\textbullet};	\node at (4.9,-0.3){f};
			\node at (6,0){\textbullet};	\node at (5.9,-0.2){g};
			
			\end{tikzpicture}
		\end{center}
		\caption{ Path graph $ P_{7} $}
	\end{figure}
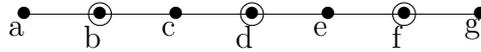
	Suppose $ P_n=(v_1,v_2,\ldots,v_n) $ and $ n=7m+r, $ where $ m\ge 1 $ and $ 0 \le r \le 6.$ Define two sets \[
	X = \bigcup_{i=0}^{m-1}\{v_{7i+2},v_{7i+4},v_{7i+6}\}
	\] and \[Y=
	\begin{cases}
	\emptyset, & 
	\text{if } r=0 \\
	\{v_{7m+1}\}, & \text{if } r=1 \text{ or } 2 \\
	\{v_{7m+1},v_{7m+3}\}, & \text{if } r=3 \text{ or }4 \\
	\{v_{7m+1},v_{7m+3},v_{7m+5}\}, & \text{if } r=5 \text{ or } 6 \\  
	\end{cases}
	\]
	Clearly $ \vert X \cup Y \vert = \lceil \frac{3n}{7} \rceil $. It can be noted that the set $ X \cup Y $ forms an independent set. It can also be verified that $ X \cup Y $ forms an ISDS of $ P_n $. Hence the result.   
\end{proof}
\begin{Observation}\label{pathcycle}
	If $ G^\prime $ is a spanning subgraph of graph $ G $, then $ \gamma_{0s}(G^\prime) \ge \gamma_{0s}(G) $.
\end{Observation}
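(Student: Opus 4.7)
The plan is to show that every ISDS of $G^\prime$ is also an ISDS of $G$; specializing this to a minimum ISDS of $G^\prime$ then yields $\gamma_{0s}(G) \le \gamma_{0s}(G^\prime)$, which is the desired inequality. So let $S$ be a minimum ISDS of $G^\prime$ of size $\gamma_{0s}(G^\prime)$, and the goal is to verify that $S$ satisfies the three defining conditions of an ISDS when interpreted inside $G$.

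First I would exploit the fact that $V(G^\prime) = V(G)$ and $E(G^\prime) \subseteq E(G)$, which gives $N_{G^\prime}(v) \subseteq N_G(v)$ for every vertex $v$. It follows at once that any dominating set of $G^\prime$ dominates $G$, so $S$ is dominating in $G$. The same observation handles the secure-defense \emph{adjacency} requirement: for each $u \in V \setminus S$, any $v \in S \cap N_{G^\prime}(u)$ that defends $u$ in $G^\prime$ is also a neighbor of $u$ in $G$, and the swapped set $(S \setminus \{v\}) \cup \{u\}$ continues to dominate $G$ by the same argument applied to this dominating set of $G^\prime$.

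The delicate step — and the main obstacle — is the \emph{isolate} condition. Having an isolated vertex in $G^\prime[S]$ does not immediately yield one in $G[S]$, because an edge of $G$ not present in $G^\prime$ whose two endpoints happen to lie in $S$ can destroy the isolation of the witness vertex; the same issue arises for each defense set $(S \setminus \{v\}) \cup \{u\}$. To address this, I would choose, among all minimum ISDSs of $G^\prime$, one that maximizes the number of vertices isolated in $G[S]$ (not only in $G^\prime[S]$); as $G^\prime[S]$ may contain several isolated vertices, it suffices to exhibit one that has no $G$-neighbor in $S$. A parallel analysis, carried out on each swapped set $(S \setminus \{v\}) \cup \{u\}$, finishes the verification.

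Once the isolate property is shown to transfer from $G^\prime$ to $G$ for $S$ and for every defense set, the three conditions of an ISDS in $G$ are met, so $S$ is an ISDS of $G$ of size $\gamma_{0s}(G^\prime)$; taking the minimum over ISDSs of $G$ yields $\gamma_{0s}(G) \le \gamma_{0s}(G^\prime)$, as desired. I expect the argument to be presented succinctly by noting that the requirements $N[S]=V$, adjacency of the defender, and the existence of an isolated vertex in the relevant induced subgraph are all \emph{edge-monotone} in the right direction once $S$ is picked with the isolate witness in mind.
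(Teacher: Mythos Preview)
The paper states this as an \emph{Observation} with no proof, so there is no argument to compare against. More to the point, your proof has a real gap exactly where you flag the ``delicate step'': the isolate condition is \emph{not} monotone under adding edges, and your proposed fix---choosing a minimum ISDS $S$ of $G'$ that maximizes the number of vertices isolated in $G[S]$---does nothing to guarantee that this maximum is positive. The same objection applies to each swapped set $(S\setminus\{v\})\cup\{u\}$. Your closing sentence, that the isolate requirement is ``edge-monotone in the right direction once $S$ is picked with the isolate witness in mind,'' is precisely the claim that needs proof, and none is given.

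In fact the statement is false as written. Take $G'=P_4$ on $a\!-\!b\!-\!c\!-\!d$ and let $G=C_4$ be obtained by adding the edge $ad$. The set $\{a,d\}$ is an ISDS of $P_4$ (it dominates; both vertices are isolated in $G'[\{a,d\}]$; swapping $a$ for $b$ gives $\{b,d\}$ and swapping $d$ for $c$ gives $\{a,c\}$, both IDSs of $P_4$), so $\gamma_{0s}(P_4)=2$. But $C_4=K_{2,2}$, and by the paper's own Remark no complete bipartite graph admits an ISDS; one checks directly that no subset of $V(C_4)$ is an ISDS. Thus $\gamma_{0s}(G')=2$ while $\gamma_{0s}(G)$ does not exist, contradicting the inequality. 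So no general argument along your lines can succeed; the Observation requires extra hypotheses, and the paper only invokes it for the specific pair $P_n\subset C_n$.
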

\begin{prop}\label{cycle}
	Let $ C_n $ be a cycle graph with $ n $ vertices. Then $ \gamma_{0s}(C_n)=\lceil \frac{3n}{7} \rceil$. 
\end{prop}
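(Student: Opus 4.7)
I would establish $\gamma_{0s}(C_n) = \lceil 3n/7 \rceil$ by matching upper and lower bounds. The lower bound will come from comparing $\gamma_{0s}$ with $\gamma_s$ on the cycle, and the upper bound will be obtained from the path either through the spanning-subgraph observation or by exhibiting an explicit witnessing set.

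\textbf{Lower bound.} By Observation \ref{2}, $\gamma_{0s}(C_n) \ge \gamma_s(C_n)$. The value $\gamma_s(C_n) = \lceil 3n/7 \rceil$ is the classical secure domination number of a cycle (the cycle analogue of the path bound invoked in Proposition \ref{3}), so $\gamma_{0s}(C_n) \ge \lceil 3n/7 \rceil$ follows immediately.

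\textbf{Upper bound.} Since $P_n$ is a spanning subgraph of $C_n$ (delete the single edge $v_n v_1$), Observation \ref{pathcycle} together with Proposition \ref{3} yields $\gamma_{0s}(C_n) \le \gamma_{0s}(P_n) = \lceil 3n/7 \rceil$. A self-contained argument would reuse the independent set $S = X \cup Y$ built in the proof of Proposition \ref{3}: a residue-by-residue inspection on $r = n \bmod 7$ confirms $v_1 \notin S$ in every case, so the wrap-around edge $v_n v_1$ does not put two vertices of $S$ adjacent, and $S$ remains an independent dominating set, hence an IDS, of $C_n$. For each $u \in V \setminus S$ sitting in the interior of the labeling, the defender $v$ that worked in the path still works in the cycle, since the relevant induced subgraph is unchanged. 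Near the seam, even when the previously isolated vertex of $G[S']$ acquires an incidence through the edge $v_n v_1$, at least one interior set-vertex of the form $v_{7i+2}$, $v_{7i+4}$ or $v_{7i+6}$ away from the seam stays isolated in $G[S']$, so $S'$ is still an IDS.

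\textbf{Expected obstacle.} The main technical point is the seam analysis for residues $r \in \{1,3,5\}$, where $v_n$ itself lies in $S$, so that the wrap-around edge interacts with swaps for $u \in \{v_1, v_n\}$ and their cycle-neighbors. In each such residue one must pick a defender that simultaneously preserves domination and leaves at least one vertex of $S'$ isolated in $G[S']$; this amounts to a short finite case split by $r$. Once this bookkeeping is carried out, the upper bound matches the lower bound and the equality $\gamma_{0s}(C_n) = \lceil 3n/7 \rceil$ follows.
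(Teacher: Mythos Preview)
Your proposal is correct and follows essentially the same approach as the paper: the lower bound via Observation~\ref{2} and the known value $\gamma_s(C_n)=\lceil 3n/7\rceil$, and the upper bound via Observation~\ref{pathcycle} combined with Proposition~\ref{3}. The additional ``self-contained'' seam analysis you sketch is extra detail the paper does not carry out; the paper simply invokes the spanning-subgraph observation and stops there.
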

\begin{proof}
	From Observation \ref{pathcycle}, $ \gamma_{0s}(P_n) \ge \gamma_{0s}(C_n)$.  From Proposition \ref{3}, it can be noted that $ \gamma_{0s}(C_n) \le \lceil \frac{3n}{7} \rceil $. From Observation \ref{2}, we know that $ \gamma_{s}(G) \le \gamma_{0s}(G)$. It is known that $ \gamma_s(C_n)= \lceil \frac{3n}{7} \rceil$ \cite{pog}. Hence the result.  
\end{proof}
\section{Main Results}
\noindent In this section, the complexity of a new domination parameter called isolate secure domination is investigated for bipartite graphs, split graphs and bounded tree-width graphs. \\[4pt]
The decision version of isolate secure domination problem is defined as follows.\\[4pt] 
\noindent
\textit{ISOLATE SECURE DOMINATION DECISION problem} (\textit{ISDM}) \\ [4pt]
\textit{Instance:} A simple, undirected graph $G=(V, E)$ and a positive integer $p$.\\
\textit{Question:} Does there exist a ISDS of size at most $ p $ in $ G $ ?\\[4pt]
The following proposition has been proved by Cockayne et al \cite{pog}.
\begin{prop}(\cite{pog})\label{p2}
	$X$ is a SDS of $G$ if and only if for each $u \in V \setminus X$, there exists $v \in X$ such that $G[epn(v,X) \cup \{u,v\}]$ is complete. 
\end{prop}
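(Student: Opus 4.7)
The plan is to reduce the condition ``$X$ is an SDS'' to a concrete local condition on each $u \in V \setminus X$ and then show that this condition is precisely the asserted clique structure. First, I would unpack what it means for $v \in X \cap N(u)$ to defend $u$: after the swap $X' = (X \setminus \{v\}) \cup \{u\}$, the only vertices that could fail to be dominated are $v$ itself (which is safe because $uv \in E$) and those $z \in V \setminus X$ whose sole dominator in $X$ was $v$, i.e., the external private neighbors $z \in epn(v,X)$. Hence $v$ defends $u$ if and only if $uv \in E$ and $u$ is adjacent to every $z \in epn(v,X) \setminus \{u\}$.

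The backward direction ($\Leftarrow$) is then immediate: if $G[epn(v,X) \cup \{u,v\}]$ is complete, then $uv \in E$ and $u$ is adjacent to every element of $epn(v,X)$, so $v$ defends $u$, showing that $X$ is an SDS.

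For the forward direction ($\Rightarrow$), one has to produce, for each $u \in V \setminus X$, a vertex $v$ for which $G[epn(v,X) \cup \{u,v\}]$ is actually complete. Given that $X$ is an SDS, I would pick any $v \in X$ that defends $u$. Completeness of the induced subgraph requires three classes of edges: (a) $uv \in E$ and $u$ adjacent to every $z \in epn(v,X)$, which both follow from the local characterization of defence above; (b) $v$ adjacent to every $z \in epn(v,X)$, which is automatic from the definition of epn since $v \in N[z] \cap X$; and (c) any two distinct $z_1, z_2 \in epn(v,X)$ are adjacent. Condition (c) is the crux, and I would establish it by invoking the SDS hypothesis at the vertex $z_1 \in V \setminus X$: every defender of $z_1$ must lie in $N(z_1) \cap X$, but by the definition of epn this set equals $\{v\}$, forcing $v$ itself to defend $z_1$. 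Applying the local defence characterization again, $z_1$ must be adjacent to every element of $epn(v,X) \setminus \{z_1\}$, and in particular to $z_2$.

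The main obstacle is item (c), pairwise adjacency among the epns of $v$. The trick of reapplying the SDS hypothesis to an epn, and observing that its unique neighbor in $X$ forces the defender back to $v$, is the non-routine ingredient; all other steps amount to careful bookkeeping of which vertices lose their dominator under a single swap.
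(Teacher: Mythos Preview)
The paper does not prove this proposition; it is quoted from Cockayne et al.\ \cite{pog} and used as a black box, so there is no in-paper argument to compare against. Your argument is correct and is essentially the standard proof: the key step is your item~(c), where you observe that an external private neighbor $z_1$ of $v$ has $N(z_1)\cap X=\{v\}$, so $v$ is its only possible defender, forcing $z_1$ to be adjacent to every other member of $epn(v,X)$. One small omission worth making explicit in the backward direction is that the clique condition already forces $X$ to be dominating (each $u\in V\setminus X$ acquires the neighbor $v\in X$), so that the swap argument applies; otherwise the reasoning is complete.
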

\noindent The following corollary follows from Proposition \ref{p2}.
\begin{cor}\label{cor}
	For any bipartite graph $G$ with SDS $S$, $\vert epn(v,S) \vert \le 1$, $\forall v \in V$.
\end{cor}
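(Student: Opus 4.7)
The plan is to invoke Proposition \ref{p2} and exploit the fact that a bipartite graph contains no triangle. First I would dispose of the trivial case: if $v \in V \setminus S$, then by the definition of a private neighbor the condition $N[w] \cap S = \{v\}$ already forces $v \in S$, so $epn(v,S) = \emptyset$ and the inequality holds vacuously. It therefore suffices to establish the bound for $v \in S$.

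Next I would argue by contradiction. Suppose some $v \in S$ has two distinct external private neighbors $w_1, w_2 \in epn(v,S)$. Applying Proposition \ref{p2} to the vertex $u := w_1 \in V \setminus S$ produces a vertex $v' \in S$ adjacent to $w_1$ such that $G[epn(v',S) \cup \{w_1, v'\}]$ is a complete subgraph. But since $w_1 \in epn(v,S)$, the only vertex of $S$ in $N[w_1]$ is $v$, which forces $v' = v$. Hence $G[epn(v,S) \cup \{w_1, v\}]$ is complete, and in particular the three distinct vertices $v, w_1, w_2$ induce a triangle in $G$. This contradicts the bipartiteness of $G$, so $|epn(v,S)| \le 1$.

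The argument is essentially a one-line triangle observation; the only subtle step is to notice that the ``defender'' $v'$ guaranteed by Proposition \ref{p2} is forced to equal $v$ itself, which is immediate from the definition of an external private neighbor. I do not anticipate any real obstacle.
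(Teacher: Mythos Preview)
Your proof is correct and follows exactly the route the paper intends: the corollary is stated as an immediate consequence of Proposition~\ref{p2}, and your argument spells out precisely that deduction by forcing the defender $v'$ to equal $v$ and then extracting a triangle $\{v,w_1,w_2\}$, which is impossible in a bipartite graph.
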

\noindent The DOM problem in bipartite graphs (DSDPB) has been proved as NP-complete by A.A. Bertossi \cite{bersto}.
\begin{thm}\label{isdmb}
	ISDM is NP-complete for bipartite graphs.
\end{thm}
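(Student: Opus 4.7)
Membership in NP is routine: given a candidate $S \subseteq V$ of size at most $p$, one checks in polynomial time that $N[S] = V$, that $G[S]$ contains an isolated vertex, and that for each $u \in V \setminus S$ some $v \in N(u) \cap S$ makes $(S \setminus \{v\}) \cup \{u\}$ an IDS; each of these checks is a simple graph scan repeated at most $|V|\cdot|S|$ times.

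For NP-hardness, the plan is to reduce from the secure domination decision problem restricted to bipartite graphs (SDOM-B), shown to be NP-complete in \cite{osd}. Given an instance $(G,k)$ with $V(G) = \{v_1,\ldots,v_n\}$, I will build a bipartite graph $G' = (V',E')$ by attaching a pendant path of length two at each $v_i$: set $V' = V \cup \{a_i, b_i : 1 \le i \le n\}$ and $E' = E \cup \{v_i a_i,\ a_i b_i : 1 \le i \le n\}$. Placing $a_i$ opposite to $v_i$ and $b_i$ on the same side as $v_i$ keeps $G'$ bipartite. Set $k' = k + n$. The claim to establish is that $\gamma_s(G) \le k$ if and only if $\gamma_{0s}(G') \le k + n$.

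For the forward direction I take an SDS $D$ of $G$ with $|D| \le k$ and let $S = D \cup \{b_1,\ldots,b_n\}$: then $|S| \le k+n$, each $b_i$ is isolated in $G'[S]$ because $a_i \notin S$, $S$ dominates $V'$, every $a_i \notin S$ is defended by $b_i$, and every $v_i \notin D$ is defended by its SDS-defender $w \in D \cap N_G(v_i)$, since $(D \setminus \{w\}) \cup \{v_i\}$ dominates $V$ and the untouched $b_j$'s remain isolated. For the backward direction I take an ISDS $S'$ of $G'$ with $|S'| \le k+n$. The pendant $b_i$ forces $|S' \cap \{a_i,b_i\}| \ge 1$, so $D_0 := S' \cap V$ satisfies $|D_0| \le k$. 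Classifying each gadget as Type I ($a_i \in S',\ b_i \notin S'$), Type II ($a_i \notin S',\ b_i \in S'$), or Type III ($a_i, b_i \in S'$), I plan first to normalize $S'$ by the swap-exchange $a_i \mapsto b_i$ that turns every Type I gadget into Type II; the defender-of-$b_i$ condition in the original $S'$ forces $v_i$ to have a neighbour in $D_0$, so the replacement still dominates $v_i$, and the exchange preserves the isolated vertex (in particular, if $a_i$ was the isolated vertex of $G'[S']$ then $b_i$ becomes isolated in the new set). Using $|S'| = |D_0| + n + |\text{Type III}|$, I then take $D = D_0 \cup \{v_i : \text{gadget } i \text{ is Type III}\}$, which has $|D| \le k$ and dominates $V$. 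For each $v_i \notin D$ the gadget at $i$ is Type II, so $a_i \notin S'$ and the defender of $v_i$ in $G'$ lies in $D_0 \cap N_G(v_i)$; since every $v_j \notin D$ also has a Type II gadget (no Type I after normalization, and Type III puts $v_j$ in $D$), any dominator of $v_j$ in the $G'$-swap must come from $V$, so the $G'$-swap being an IDS directly forces the corresponding $G$-swap to be a dominating set of $G$, establishing that $D$ is an SDS of $G$.

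The main obstacle is the normalization step. Although the local effect of the exchange $a_i \mapsto b_i$ is clean (the only $G'$-neighbours of $a_i$ are $v_i$ and $b_i$), I must verify that the security of every other $u \in V' \setminus S'$ survives: each such $u$ had a defender $v$ in $S'$, and either $v \ne a_i$ (in which case the swap in the normalized set is still an IDS because removing $a_i$ only affects the domination of $v_i$ and $b_i$, both handled by the added $b_i$ and by the $D_0$-neighbour of $v_i$) or $v = a_i$, which can only happen for $u \in \{v_i, b_i\}$ and is handled directly. Making this bookkeeping precise, together with verifying that an isolated vertex persists through each of these swaps, is the bulk of the technical work.
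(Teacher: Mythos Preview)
Your approach is genuinely different from the paper's: the paper reduces from the ordinary \emph{Dominating Set} problem on bipartite graphs and uses two fixed-size gadgets (one attached to each side of the bipartition), whereas you reduce from \emph{Secure Domination} on bipartite graphs and attach a pendant $P_2$ to every vertex. Unfortunately your reduction is not merely under-argued at the normalization step---it is incorrect.

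Take $G=K_{1,3}$ with center $c$ and leaves $\ell_1,\ell_2,\ell_3$, so $n=4$ and $\gamma_s(G)=3$. In your $G'$, let $a_v,b_v$ denote the two pendant-path vertices at $v$. The set
\[
S'=\{\,c,\;a_c,\;a_{\ell_1},\;a_{\ell_2},\;a_{\ell_3}\,\}
\]
has size $5$ and is an ISDS of $G'$: it dominates $V(G')$; each $a_{\ell_j}$ is isolated in $G'[S']$; every $\ell_j$ is defended by $c$ (the swap $\{a_c,a_{\ell_1},a_{\ell_2},a_{\ell_3},\ell_j\}$ still dominates because each $\ell_m$ is covered by $a_{\ell_m}$, and $a_{\ell_m}$ with $m\ne j$ remains isolated); and each $b_v$ is defended by $a_v$. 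Hence $\gamma_{0s}(G')\le 5 = 1+n$, while $\gamma_s(G)=3>1$. With $k=1$ this gives a NO-instance of SDOM that maps to a YES-instance of ISDM, so the claimed equivalence $\gamma_s(G)\le k \iff \gamma_{0s}(G')\le k+n$ fails in the backward direction.

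The failure is exactly at the point you flagged as ``the main obstacle'': with many Type~I gadgets (here all four are Type~I) the $a$-vertices can do double duty, dominating both the leaf $b_v$ and the original vertex $v$, so the $n$ budget you reserved for gadget vertices is already enough to secure $G'$ without encoding any nontrivial secure-domination information about $G$. Your sentence ``$v_i$ \ldots handled \ldots by the $D_0$-neighbour of $v_i$'' breaks because that $D_0$-neighbour may be precisely the defender $w$ removed in the swap (and in the counterexample $D_0=\{c\}$, so the unique $D_0$-neighbour of every leaf is $c$). This is not a bookkeeping issue that more care will fix; the gadget is too weak and the reduction must be redesigned.
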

\begin{proof} It can be shown that ISDM is in NP, since if a set $S \subseteq V$, such that $|S| \le k$ is given as a witness to a yes instance then it can be verified in polynomial time that $S$ is an ISDS of $G$. We reduce the DSDPB problem instance to ISDM problem instance for bipartite graphs as follows. Given a bipartite graph $G=(X, Y, E)$, we construct a graph $G^\prime=(X^\prime, Y^\prime, E^\prime)$ where $X^\prime= X$ $\cup$ $\{b_1,d_1,f_1,g_1,a_2,c_2,e_2\} $, $Y^\prime= Y \cup \{a_1,c_1,e_1,b_2,d_2,f_2,g_2\}$ and $E^\prime= E \cup \{ (a_1, v)$ $:$ $v \in X\} \cup \{ (a_2, v)$ $:$ $v \in Y\} \cup \{(a_i,b_i),(a_i,f_i),(a_i,g_i),(b_i,c_i)$, $(c_i,d_i),(d_i,e_i)$,
	$(e_i,f_i) : 1 \le i \le 2\}.$ Clearly, $G^\prime$ is a bipartite graph and can be constructed from $G$ in polynomial time. An example construction of a graph $G^\prime$ from a graph $G$ is depicted in figure \ref{fig:sub2}.\par
	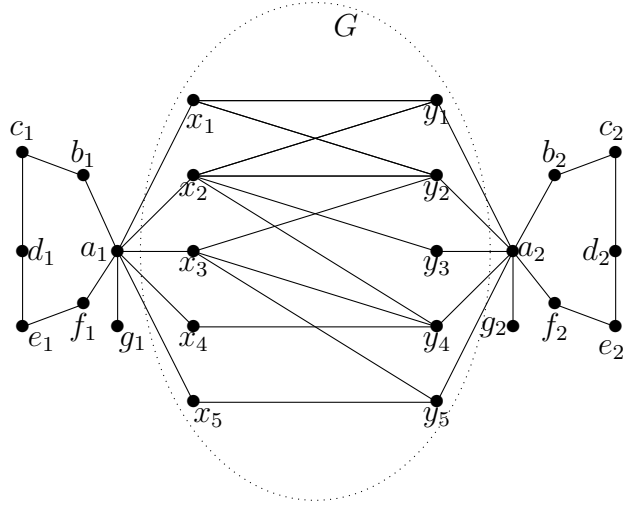
\begin{figure}
		\begin{center}
			\begin{tikzpicture}[scale=1.0]
			\node at (2,5){$ G $};
			\draw[dotted] (1.6,2) ellipse (2.3cm and 3.3cm);
			\node at (0,0){\textbullet}; \node at (0.2,-0.2){$ x_5 $};
			\node at (0,1){\textbullet}; \node at (0,0.8){$ x_4 $};
			\node at (0,2){\textbullet}; \node at (0,1.8){$ x_3 $};
			\node at (0,3){\textbullet}; \node at (0,2.8){$ x_2 $};
			\node at (0,4){\textbullet}; \node at (0.1,3.7){$ x_1 $};
			
			\node at (3.2,0){\textbullet}; \node at (3.2,-0.2){$ y_5 $};
			\node at (3.2,1){\textbullet}; \node at (3.2,0.8){$ y_4 $};
			\node at (3.2,2){\textbullet}; \node at (3.2,1.8){$ y_3 $};
			\node at (3.2,3){\textbullet}; \node at (3.2,2.8){$ y_2 $};
			\node at (3.2,4){\textbullet}; \node at (3.2,3.8){$ y_1 $};
			
			\draw (0,4)--(3.2,4)--(0,3)--(3.2,3)--(0,4);
			\draw (0,4)--(3.2,4)--(0,3)--(3.2,3)--(0,4);
			\draw (0,2)--(3.2,1)--(0,1);
			\draw (0,2)--(3.2,3);
			\draw (0,2)--(3.2,0);
			\draw (3.2,2)--(0,3)--(3.2,1);
			\draw (0,0)--(3.2,0);
			
			\node at (-1,2){\textbullet};\node at (-1.3,2){$ a_1 $};
			\node at (4.2,2){\textbullet};\node at (4.45,2){$ a_2 $};
			
			\draw (0,0)--(-1,2);
			\draw (0,1)--(-1,2);
			\draw (0,2)--(-1,2);
			\draw (0,3)--(-1,2);
			\draw (0,4)--(-1,2);
			
			\draw (3.2,0)--(4.2,2);												
			\draw (3.2,1)--(4.2,2);												
			\draw (3.2,2)--(4.2,2);												
			\draw (3.2,3)--(4.2,2);												
			\draw (3.2,4)--(4.2,2);												
			
			\node at (-1.45,3){\textbullet};\node at (-1.45,3.3){$ b_1 $};
			\node at (-2.25,3.3){\textbullet};\node at (-2.25,3.6){$ c_1 $};
			\node at (-2.25,2){\textbullet};\node at (-2.0,2){$ d_1 $};
			\node at (-2.25,1){\textbullet};\node at (-2.0,0.8){$ e_1 $};
			\node at (-1.45,1.3){\textbullet};\node at (-1.45,1){$ f_1 $};
			
			\node at (-1,1){\textbullet};\node at (-0.8,0.8){$ g_1 $};
			\draw (-1,2)--(-1,1);
			\node at (4.75,3){\textbullet};\node at (4.75,3.3){$ b_2 $};
			\node at (5.55,3.3){\textbullet};\node at (5.5,3.6){$ c_2 $};
			\node at (5.55,2){\textbullet};\node at (5.3,2){$ d_2 $};
			\node at (5.55,1){\textbullet};\node at (5.5,0.7){$ e_2 $};
			\node at (4.75,1.3){\textbullet};\node at (4.75,1){$ f_2 $};
			
			\node at (4.2,1){\textbullet};\node at (3.95,1){$ g_2 $};
			\draw (4.2,2)--(4.2,1);
			
			\draw (-1,2)--(-1.45,3)--(-2.25,3.3)--(-2.25,2)--(-2.25,1)--(-1.45,1.3)--(-1,2);
			\draw (4.2,2)--(4.75,3)--(5.55,3.3)--(5.55,2)--(5.55,1)--(4.75,1.3)--(4.2,2);
			%
			%
			\end{tikzpicture}
			\caption{Example construction of a graph $G^\prime$}\label{fig:sub2}
			
		\end{center}
	\end{figure}
	Next, we prove that $G$ has a dominating set of size at most $k$ if and only if $G^\prime$ has an ISDS of size at most $p=k+6$. Let $D$ be a dominating set of size at most $k$ in $G$ and $D^* = D \cup \lbrace a_1, c_1, e_1, a_2, c_2, e_2 \rbrace$. Clearly $D^*$ is an ISDS of size at most $k+6$ in $G^\prime$.\par
	Conversely, suppose that $G^\prime$ has an ISDS $D^*$ of size at most $p=k+6$. It can be observed that $\vert D^* \cap \{a_1,b_1,c_1,d_1,e_1,f_1\}\vert$ $\ge 3$, $\vert D^* \cap \{a_2,b_2,c_2,d_2,e_2,f_2\} \vert \ge 3$ and $\vert D^* \cap (X \cup Y) \vert \le k$. Since $g_1$ and $g_2$ are pendant vertices, $\vert D^* \cap \{a_1,g_1\} \vert \ge 1$ and $\vert D^* \cap \{a_2,g_2\} \vert \ge 1$. Let $D_1=D^* \cap \{a_1,g_1\}$ and $D_2=D^* \cap \{a_2,g_2\}$. Now we consider the following four possible cases. \\
	\textit{case }(\textit{i}). $\vert D_1 \vert =1$ and $\vert D_2 \vert =1$, then all vertices in $X \setminus D^*$ can be dominated by $Y \cap D^*$ and all vertices in $Y \setminus D^*$ can be dominated by $X \cap D^*$ hence $D^* \cap (X \cup Y)$ is a dominating set of size at most $k$ in $G$.\\
	\textit{case }(\textit{ii}). $\vert D_1 \vert =1$ and $\vert D_2 \vert =2$, then all vertices in $X \setminus D^*$ can be dominated by $Y \cap D^*$ and from Corollary \ref{cor}, $\vert epn(a_2,D^*) \vert \le 1,$ therefore $D^\prime \cup epn(a_2, D^*)$ is a dominating set of size at most $k$ in $G$.\\
	\textit{case }(\textit{iii}). $\vert D_1 \vert =2$ and $\vert D_2 \vert =1$, this case is analogous to case (ii).\\
	\textit{case }(\textit{iv}). $\vert D_1 \vert =2$ and $\vert D_2 \vert =2$, then if $D^* \cap (X \cup Y)$ is a dominating set of $G$, we are done. Otherwise, let $D^\prime=D^* \cap (X \cup Y)$. From Corollary \ref{cor}, $\vert epn(a_i, D^*) \vert \le 1,$ for $1\le i\le 2$ and hence $\vert D^\prime \vert \le k-2$. In such a case $D=D^\prime \cup \{epn(a_i, D^*) : 1 \le i \le 2 \}$ is a dominating set of size at most $k$ in $G$. Hence the proof. 
\end{proof}
\noindent The decision version of secure domination problem is defined as follows.\\[5pt]
\noindent
\textit{SECURE DOMINATION DECISION problem (SDOM)} \\ [6pt]
\textit{Instance:} A simple, undirected graph $G = (V, E)$ and a positive integer $k$.\\
\textit{Question:} Does there exist a SDS of size at most $ k $ in $ G $ ?\\[4pt]
It has been proved that the SDOM is NP-complete for split graphs by H.B. Merouane et al. \cite{osd}
\begin{thm}\label{isdms}
	ISDM is NP-complete for split graphs.
\end{thm}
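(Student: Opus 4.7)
The plan is to reduce SDOM on split graphs, known to be NP-complete by \cite{osd}, to ISDM on split graphs. The central observation is that the only gap between an SDS and an ISDS is the requirement that $G[S]$ contain an isolated vertex, and this requirement can be discharged cheaply by appending to $G$ a single new vertex with no incident edges. Membership of ISDM in NP is routine: a verifier checks, in polynomial time, that a candidate $S$ of size at most $p$ is a dominating set, that $G[S]$ has an isolated vertex, and that every $u \in V \setminus S$ admits a defender $v \in S$ for which $(S \setminus \{v\}) \cup \{u\}$ is an IDS.

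For NP-hardness, given an SDOM instance $(G,k)$ with $G=(K,I,E)$ split, I would construct the split graph $G' = (K, I \cup \{w\}, E)$ by introducing one new vertex $w$ with no incident edges, and set $p = k+1$. The construction is polynomial, and $G'$ remains split because $w$ has no edges, so $I \cup \{w\}$ is still an independent set while $K$ is still a clique. For the forward direction, I would show that if $S$ is an SDS of $G$ with $|S| \le k$, then $S \cup \{w\}$ is an ISDS of $G'$ of size $k+1$: the vertex $w$ is isolated in $G'[S \cup \{w\}]$, supplying the IDS requirement, and since $w$ has no neighbors it persists unchanged through every secure-defense swap, so the swapped set retains $w$ as its isolated vertex. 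For the reverse direction, I would observe that any ISDS $S'$ of $G'$ must contain $w$, since $w$ has no potential dominator, and that $S = S' \setminus \{w\}$ is an SDS of $G$ of size at most $k$: domination carries over because $w$ dominates nothing outside itself, and the secure-defense condition restricts cleanly to $G$ because the defender of any $u \in V(G) \setminus S'$ must be a neighbor of $u$ and therefore cannot be $w$.

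The key insight is the choice of gadget itself --- a single auxiliary isolated vertex suffices, in stark contrast with the elaborate six-cycle-plus-pendant construction employed for bipartite graphs in Theorem~\ref{isdmb}. Once the construction is fixed, the technical arguments are essentially bookkeeping, and the only minor subtlety I expect to be the main obstacle is verifying in the reverse direction that every defender lies in $V(G) \cap S'$ rather than being $w$; this follows immediately from $w$ having no neighbors in $G'$, so neither the dominating property nor the secure-defense property is corrupted when we restrict $S'$ to $V(G)$.
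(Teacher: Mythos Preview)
Your argument is correct and the reduction is cleaner than the one in the paper. Both proofs reduce SDOM on split graphs to ISDM on split graphs, but the gadgets differ. The paper attaches four new vertices: two new clique vertices $x_1,y_1$ (each joined to all of $C$ and to one another) together with private pendants $x_2,y_2$, and shows $G$ has an SDS of size at most $k$ iff $G'$ has an ISDS of size at most $k+2$. In the forward direction $D^* = D \cup \{x_2,y_2\}$, and the point of having \emph{two} pendant pairs is that when, say, $x_1$ attacks and $x_2$ swaps out, the other pendant $y_2$ still supplies the required isolated vertex. Your single isolated vertex $w$ accomplishes the same thing more directly: $w$ is isolated in every induced subgraph containing it, so the isolate condition is automatic before and after every defense swap, and since $w$ has no neighbours it can neither dominate nor defend anything in $V(G)$, making the restriction $S'\setminus\{w\}$ transparent.

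What the paper's heavier gadget buys is that $G'$ remains connected whenever $G$ is, whereas your $G'$ is always disconnected. Under the paper's own definition of split graph (a partition into a clique and an independent set, with no connectivity hypothesis) this is immaterial, and your proof stands; but it is worth noting that if one restricts attention to connected split graphs, the paper's construction still applies while yours would not.
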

\begin{proof} It is known that ISDM is in NP. We reduce SDOM problem instance for split graphs to ISDM problem instance for split graphs as follows. Given a split graph $G=(C, I, E)$, construct a graph $G^\prime= (C^\prime, I^\prime, E^\prime)$, where $C^\prime= C$ $\cup$ $\{x_1,y_1\} $, $I^\prime= I \cup \{x_2,y_2\}$ and $E(G^\prime)= E \cup \{ (x_1, v), (y_1, v) : v \in C\} \cup \{(x_1,y_1),(x_1,x_2),(y_1,y_2)\}.$ Note that $G^\prime$ is a split graph and can be constructed from $G$ in polynomial time. Figure \ref{fig:sub4} illustrates an example construction.\par
	\begin{figure}[H]					
		\begin{center}			
			\begin{tikzpicture}[scale=1.00]
			
			\node at (1,0){\textbullet}; \node at (0.7,0){$x_2$}; 
			\node at (4,0){\textbullet}; \node at (4.3,0){$y_2$}; 
			\node at (1,1.2){\textbullet}; \node at (0.7,1.2){$x_1$}; 
			\node at (4,1.2){\textbullet}; \node at (4.3,1.2){$y_1$};
			
			\node at (1,2.4){\textbullet}; \node at (0.85,2.4){$b$}; 
			\node at (4,2.4){\textbullet}; \node at (4.2,2.2){$c$};
			
			\node at (1,3.6){\textbullet}; \node at (0.7,3.6){$a$}; 
			\node at (4,3.6){\textbullet}; \node at (4,3.85){$d$};
			\node at (2.5,4.5){\textbullet}; \node at (2.2,4.5){$e$}; 
			
			\draw (1,0)--(1,1.2)--(1,2.4);
			\draw (4,0)--(4,1.2)--(4,2.4);
			
			\draw (1,3.6)--(1,2.4)--(4,2.4)--(4,3.6)--(2.5,4.5)--(1,3.6)--(4,3.6)--(1,2.4)--(2.5,4.5)--(4,2.4)--(1,3.6);
			\draw (1,1.2)..controls (0.45,2.4)..(1,3.6);
			\draw (1,3.6)--(4,1.2)--(1,1.2)--(4,2.4);
			\draw (1,1.2)--(4,3.6);
			\draw (1,1.2)--(2.5,4.5);
			\draw (4,1.2)..controls (4.45,2.4)..(4,3.6);
			\draw (4,1.2)--(2.5,4.5);						
			\draw (4,1.2)--(1,2.4);
			
			\node at (6.5,2.4){\textbullet}; \node at (6.75,2.4){$r$}; 
			\node at (6.5,3.6){\textbullet}; \node at (6.75,3.6){$q$}; 
			\node at (6.5,4.5){\textbullet}; \node at (6.75,4.5){$p$}; 
			
			\draw (6.5,2.4)--(1,3.6)--(6.5,4.5)--(4,3.6);
			\draw (4,2.4)--(6.5,3.6)--(2.5,4.5);
			
			\draw [dotted](3.8,3.6) ellipse (4cm and 2cm);
			\node at (4,5.3){$ G $};
			\end{tikzpicture}
			
			\caption{Example construction of a split graph $G^\prime$}\label{fig:sub4}	
		\end{center}
		
	\end{figure}
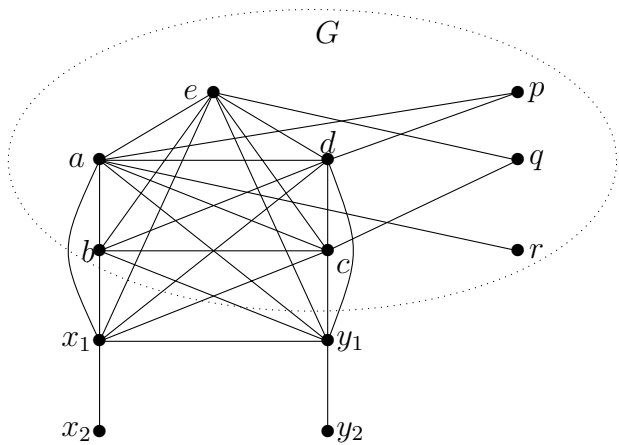
	
	Next, we shall show that $G$ has a SDS of size at most $k$ if and only if $G^\prime$ has an ISDS of size at most $p=k+2$. Let $D$ be a SDS of size at most $k$ in $G$ and $D^* = D \cup \lbrace x_2, y_2 \rbrace$. It can be easily verified that $D^*$ is an ISDS of size at most $k+2$ in $G^*$.\par
	Conversely, suppose that $G^\prime$ has an ISDS $D^*$ of size at most $p=k+2$. Let $X^\prime= D^* \cap \{x_1,x_2\}$, $Y^\prime= D^* \cap \{y_1,y_2\}$ and $D^\prime=D^* \setminus (X^\prime \cup Y^\prime)$. We now have the following cases. (i) $\vert X^\prime \vert = 1$ and  $\vert Y^\prime \vert = 1$, then $D^\prime$ is a SDS of size at most $k$ in $G$. (ii) $\vert X^\prime \vert = 1$ and  $\vert Y^\prime \vert = 2$, then for any $v \in C \setminus D^*$, $D^\prime \cup \{v\}$ is a SDS of size at most $k$ in $G$. (iii) $\vert X^\prime \vert = 2$ and  $\vert Y^\prime \vert = 1$, this case is similar to the previous case. (iv) $\vert X^\prime \vert = 2$ and  $\vert Y^\prime \vert = 2$, then for any two vertices $u,v \in C \setminus D^*$ and $u \ne v$, $D^\prime \cup \{u, v\}$ is a SDS of size at most $k$ in $G$.  
\end{proof}
\noindent Since the Domination problem is w[2]-complete for bipartite graphs and split graphs \cite{vraman} and the reductions in Theorems \ref{isdmb} and \ref{isdms} are in the function of the parameter $ k,$ the following two corollaries are immediate.
\begin{cor}
	ISDM is w[2]-hard for bipartite graphs.
\end{cor}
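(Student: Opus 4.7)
The plan is to observe that the corollary is an immediate consequence of the reduction already constructed in Theorem \ref{isdmb}, reinterpreted in the parameterized setting. The hypothesis we may freely invoke is that DOM restricted to bipartite graphs (DSDPB) is w[2]-hard, together with the basic fact that w[2]-hardness transfers along any parameterized (fpt) reduction, i.e.\ a polynomial-time reduction from instance $(G,k)$ to instance $(G',k')$ in which the new parameter $k'$ is bounded by a computable function of the original parameter $k$.

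First, I would recall the construction from Theorem \ref{isdmb}: given a bipartite instance $(G,k)$ of DOM, we built a bipartite graph $G'$ of size polynomial in $|V(G)|$ and set $p = k+6$. Second, I would verify explicitly that this is a parameterized reduction: (a) $G'$ is produced in polynomial time, as already argued in the proof of Theorem \ref{isdmb}; (b) the parameter of the target instance satisfies $p = k+6$, which is a (linear, hence computable) function of $k$ alone; (c) the equivalence $(G,k) \in \text{DSDPB} \iff (G',p) \in \text{ISDM}$ is exactly what was established in Theorem \ref{isdmb}. Third, combining these three points with the w[2]-hardness of DSDPB yields a parameterized reduction from a w[2]-hard problem to ISDM on bipartite graphs, which gives the desired w[2]-hardness.

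I do not expect a serious obstacle here; the entire content of the corollary is the remark that the additive offset $+6$ in $p = k+6$ makes the reduction parameter-preserving in the fpt sense. The only care needed is to state the definition of a parameterized reduction correctly and to point to the specific place in the proof of Theorem \ref{isdmb} where the relation $p = k+6$ appears, so the reader can see that $p$ depends only on $k$ and not on $|V(G)|$. The proof will therefore be a short paragraph invoking Theorem \ref{isdmb} and the known w[2]-hardness of DOM on bipartite graphs from \cite{vraman}.
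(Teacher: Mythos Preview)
Your proposal is correct and follows essentially the same approach as the paper: the paper simply observes that DOM on bipartite graphs is $W[2]$-complete \cite{vraman} and that the reduction of Theorem~\ref{isdmb} has target parameter $p=k+6$, a function of $k$ alone, so the corollary is immediate. You spell out the definition of a parameterized reduction more carefully than the paper does, but the content is the same.
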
 
\begin{cor}
	ISDM is w[2]-hard for split graphs.
\end{cor}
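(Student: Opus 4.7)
The plan is to exhibit a parameterized reduction from a known w[2]-hard problem on split graphs to ISDM on split graphs, with the solution size as the parameter in both. The reduction already built in the proof of Theorem \ref{isdms} does most of the work: given a split graph instance $(G,k)$ of SDOM, it produces in polynomial time a split graph instance $(G',k+2)$ of ISDM, and the new parameter $p=k+2$ is an affine function of $k$ alone. So this reduction is already an FPT-reduction, not merely a Karp reduction, and transfers w[2]-hardness from SDOM on split graphs to ISDM on split graphs.

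It therefore suffices to argue that SDOM is itself w[2]-hard on split graphs, and this I would obtain by composition. Apply the reduction of Merouane et al.\ \cite{osd} from DOM on split graphs to SDOM on split graphs; inspection of that construction shows the target parameter is a bounded function of the source parameter, so it is again an FPT-reduction. Since DOM on split graphs is w[2]-complete by \cite{vraman}, chaining these two parameterized reductions yields a parameterized reduction from DOM on split graphs to ISDM on split graphs, establishing that ISDM is w[2]-hard on split graphs.

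The main thing to verify is that every link preserves the parameter in the FPT sense, i.e., the target parameter is a computable function of the source alone. For the Theorem \ref{isdms} link this is transparent from $p=k+2$, and it is exactly this observation that the authors invoke when saying the reduction ``is in the function of the parameter $k$''. The only potentially delicate link is the DOM-to-SDOM reduction of \cite{osd}; should its parameter growth not be stated explicitly there, a clean alternative is a single-step direct reduction, obtained by first applying the \cite{osd} construction to $(G,k)$ and then appending the $\{x_1,y_1,x_2,y_2\}$ gadget of Theorem \ref{isdms}. The composite graph remains split, the budget remains linear in $k$, and the correctness arguments of \cite{osd} and of Theorem \ref{isdms} chain through unchanged, giving the corollary.
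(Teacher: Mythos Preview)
Your proposal is correct and follows essentially the same approach as the paper: the authors simply observe that the reduction of Theorem~\ref{isdms} is parameter-preserving ($p=k+2$) and invoke the w[2]-completeness of Domination on split graphs from~\cite{vraman}. You are in fact more careful than the paper, since the reduction in Theorem~\ref{isdms} starts from SDOM rather than DOM, and you explicitly bridge this gap by noting that the Merouane--Chellali reduction~\cite{osd} from DOM to SDOM on split graphs is itself an FPT-reduction; the paper leaves this intermediate step implicit.
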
 
\subsection{ISDM for bounded tree-width graphs}
Let $ G $ be a graph, $ T $ be a tree and $v\ $ be a family of vertex sets $ V_t \subseteq V (G) $ indexed by the vertices $ t $ of $ T $ . 
The pair $ (T, v\ ) $ is called a tree-decomposition of $ G $ if it satisfies the following three conditions:
(i) $ V(G)  =  \bigcup_{t \in V(T)} V_t $,
(ii) for every edge $ e \in E(G) $ there exists a $ t \in V(T)$ such that both ends of $ e $ lie in $ V_t $,
(iii) $V_{t_1} \cap   V_{t_3}  \subseteq V_{t_2}$ whenever $ t_1$, $t_2$, $t_3 \in V(T) $ and $ t_2 $ is on the path in $ T $ from $ t_1 $ to $ t_3 $.
The width of $ (T, v\ ) $ is the number $ max\{\vert V_t \vert-1 : t\in T \}$, and the tree-width $ tw(G) $ of $ G $ is the minimum width of any tree-decomposition of $ G $. By Courcelle's Thoerem, it is well known that every graph problem that can be described by counting monadic second-order logic (CMSOL) can be solved in linear-time in graphs of bounded tree-width, given a tree decomposition as input \cite{courc}. We show that ISDM problem can be expressed in CMSOL. 
\begin{thm}[\textit{Courcelle's Theorem}](\cite{courc})\label{cmsol1}
	Let $ P $ be a graph property expressible in CMSOL and let $ k $ be
	a constant. Then, for any graph $ G $ of tree-width at most $ k $, it can be checked in linear-time whether $ G $ has property $ P $.
\end{thm}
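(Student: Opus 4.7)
The plan is to reduce model-checking for a CMSOL sentence on a width-$k$ graph to an emptiness/membership question for a finite bottom-up tree automaton running on a labeled tree derived from the tree decomposition $(T,v)$. The key observation is that once a bag size is bounded by $k+1$, the ``local state'' of the computation at a node of $T$ depends only on finitely many isomorphism types of decorated bags, so the algorithm can be encoded in a finite-state device.

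First I would fix a nice tree decomposition of $G$ in which every node is of introduce, forget, join, or leaf type; this normalization is classical and can be produced in linear time from any width-$k$ tree decomposition. I would then relabel each node $t\in V(T)$ by a letter recording the isomorphism type of $G[V_t]$ together with the subset of $V_t$ shared with the parent bag. Because $|V_t|\le k+1$, the resulting alphabet $\Sigma$ has size bounded by a function of $k$ only, and the labeled tree has size $O(|V(G)|)$.

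Next I would construct, by induction on the structure of the CMSOL sentence $\varphi$, a bottom-up tree automaton $\mathcal{A}_\varphi$ over $\Sigma$ such that $\mathcal{A}_\varphi$ accepts the labeled tree iff $G\models\varphi$. Free first- and second-order variables are handled by the standard product construction: replace $\Sigma$ by $\Sigma\times\{0,1\}^{\ell}$ so that each bag records which of its bounded-size vertex set lies in the current assignment. Boolean connectives and existential quantification (first- and second-order) correspond to intersection, complement, and projection, all of which preserve recognizability of tree languages. The distinctive feature of CMSOL, namely the modular counting quantifiers $\exists^{(p,q)}$, is absorbed by tracking residues modulo $q$ in the state space, which again remains finite.

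The main obstacle, and the part I would want to be careful about, is correctness of the inductive step: I must verify that each state of $\mathcal{A}_\varphi$ after processing the subtree rooted at $t$ captures exactly the CMSOL-type (of the relevant quantifier rank) of the ``partial graph'' built so far, relative to the interface vertices that still appear in $V_t$. Once this is in hand, running $\mathcal{A}_\varphi$ bottom-up on the decorated tree takes time linear in $|V(G)|$, since $|\mathcal{A}_\varphi|$ depends only on $\varphi$ and $k$, both constants. Because this construction is lengthy and entirely standard, for the paper I would simply appeal to the proof in \cite{courc}; the role of Theorem~\ref{cmsol1} here is to serve as a black box to be invoked in the next result, where the actual work will be to exhibit a CMSOL formula for the ISDS property.
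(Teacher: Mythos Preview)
The paper does not prove Theorem~\ref{cmsol1} at all: it is quoted verbatim as a cited result from \cite{courc} and used purely as a black box, exactly as you anticipate in your final paragraph. Your automata-theoretic sketch is the standard argument behind Courcelle's theorem, but since the paper's ``proof'' is simply the citation, there is nothing to compare---your proposal and the paper agree that this statement is to be invoked rather than re-proved.
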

\begin{thm}\label{cmsol2}
	Given a graph $ G $ and a positive integer $ k $, ISDM can be expressed in CMSOL.
\end{thm}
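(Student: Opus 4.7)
The plan is to exhibit a CMSOL formula $\Phi(S)$ with a single free set variable $S$ that holds in $G$ precisely when $S$ is an ISDS, and then combine $\Phi(S)$ with a cardinality constraint to capture the decision problem. Because ISDS is built layer-by-layer from domination and from an isolated-vertex condition evaluated on a one-vertex modification of $S$, the construction is modular: I would first express the component predicates, then compose them via second-order quantification.

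First I would assemble the basic building blocks. Writing $\mathrm{adj}(u,v)$ for the adjacency relation, define
\begin{align*}
\mathrm{Dom}(S) &\equiv \forall v\,\bigl(v \in S \vee \exists u\,(u \in S \wedge \mathrm{adj}(u,v))\bigr),\\
\mathrm{Isol}(S) &\equiv \exists v\,\bigl(v \in S \wedge \forall u\,(u \in S \rightarrow \neg\,\mathrm{adj}(u,v))\bigr),\\
\mathrm{IDS}(S) &\equiv \mathrm{Dom}(S) \wedge \mathrm{Isol}(S).
\end{align*}
These are first-order with a single free set variable $S$, hence trivially inside CMSOL.

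The subtler part is the defense condition. For each $u \notin S$, a defender $v \in S$ adjacent to $u$ must exist such that the swap set $(S \setminus \{v\}) \cup \{u\}$ is an IDS. Because CMSOL does not permit syntactic set manipulation, I would name this swap set via a freshly existentially quantified set variable $T$ constrained by
\[
\mathrm{Swap}(S,u,v,T) \equiv \forall w\,\bigl((w \in T) \leftrightarrow ((w \in S \wedge w \neq v) \vee w = u)\bigr),
\]
and then invoke the already-defined $\mathrm{IDS}(T)$. Combining these yields
\[
\mathrm{ISDS}(S) \equiv \mathrm{IDS}(S) \wedge \forall u\,\bigl(u \notin S \rightarrow \exists v\,\exists T\,(v \in S \wedge \mathrm{adj}(u,v) \wedge \mathrm{Swap}(S,u,v,T) \wedge \mathrm{IDS}(T))\bigr).
\]

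Finally, the size bound $|S| \le k$ is incorporated either by existentially quantifying $k$ vertex witnesses that cover $S$ (a formula of length linear in $k$) or by appealing to the optimization variant of Courcelle's theorem, which permits minimizing $|S|$ over all sets satisfying $\mathrm{ISDS}(S)$. The principal obstacle is the swap-and-test step: one cannot modify $S$ syntactically inside the formula, so the swapped set $T$ must be introduced via a fresh second-order quantifier and its contents tied back to $S$, $u$, and $v$ through the biconditional in $\mathrm{Swap}$ before reusing $\mathrm{IDS}$. Once this encoding trick is in place, every other ingredient is a routine first-order translation, and Theorem~\ref{cmsol1} then yields the desired linear-time algorithm on bounded tree-width graphs.
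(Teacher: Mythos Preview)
Your proposal is correct and follows essentially the same layer-by-layer strategy as the paper: define domination, then isolate domination, then the full ISDS predicate with the defense condition, and finally attach the size bound. The only notable difference is cosmetic---where the paper plugs the swapped set $(S\setminus\{y\})\cup\{x\}$ directly into its $\textit{Isolate-Dom}$ predicate as syntactic sugar, you introduce an auxiliary set variable $T$ pinned down by $\mathrm{Swap}$; your version is more formally explicit, and you also retain the adjacency requirement $\mathrm{adj}(u,v)$ on the defender, which the paper's displayed formula in fact omits.
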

\begin{proof}
	First, we present the CMSOL formula which expresses that the graph $ G $ has a dominating set of size at most $ k.$
	{\small 	$$Dominating(S)= (\vert S \vert \le k)  \land (\forall p)((\exists q)(q\in S \land adj(p,q))) \lor (p\in S)$$}
	where $ adj(p, q) $ is the binary adjacency relation which holds if and only if, $ p, q $ are two adjacent vertices of $ G.$
	$ Dominating(S) $ ensures that for every vertex $ p \in V $, either $ p\in S $ or $ p $ is adjacent to a vertex in $ S$ and the cardinality of $ S $ is at most $ k.$
	{\small 	$$Isolate-Dom(S)=Dominating(S) \land (\exists p)(p \in S \land ((\nexists q)(q\in (S \setminus \{p\}) \land adj(p,q))) $$} 
	Now, by using the above two CMSOL formulas we can express ISDM in CMSOL formula as follows.\\[4pt]
	{\small 	$ ISDM(S)=Isolate-Dom(S) \land$ \\ \hspace*{2.2cm}$(\forall x)((x \in S) \lor$ $((\exists y) (y \in S \land$ $Isolate-Dom((S \setminus \{y\}) \cup \{x\}) )))$}\\[4pt]
	Therefore, ISDM can be expressed in CMSOL.
\end{proof}
\noindent Now, the following result is immediate from Theorems \ref{cmsol1} and \ref{cmsol2}.
\begin{thm}
	ISDM can be solvable in linear time for bounded tree-width graphs. 
\end{thm}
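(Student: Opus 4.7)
The plan is to derive this result as an immediate corollary of the two theorems just proved, so the proof will essentially be a one-step composition rather than a new argument. The overall strategy is: take any graph $G$ of tree-width at most a fixed constant $k$, invoke Theorem \ref{cmsol2} to get a CMSOL sentence $ISDM(S)$ that holds precisely when $S$ is an isolate secure dominating set of size at most $p$, and then feed this sentence into Courcelle's Theorem (Theorem \ref{cmsol1}) to obtain a linear-time decision procedure.

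More concretely, the steps I would carry out are as follows. First, I would observe that the property ``there exists a set $S \subseteq V(G)$ with $|S| \le p$ such that $S$ is an ISDS of $G$'' can be written as the CMSOL sentence $(\exists S)\,ISDM(S)$, using the formula constructed in the proof of Theorem \ref{cmsol2}; the outer existential set quantifier is permitted in monadic second-order logic, and the body $ISDM(S)$ is already shown to be CMSOL-expressible. Second, I would note that bounded tree-width is a fixed parameter of the input class, so $k$ in Courcelle's Theorem is a constant and the linear-time guarantee applies uniformly. Third, I would invoke Theorem \ref{cmsol1} with the property $P$ being the existence of an ISDS of size at most $p$, which yields a linear-time algorithm on graphs of tree-width at most $k$.

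There is no genuine obstacle here; the only point that deserves a remark is whether the input must come with an accompanying tree decomposition. Courcelle's Theorem as stated in Theorem \ref{cmsol1} assumes a tree decomposition is given, but since the class is restricted to graphs of tree-width at most $k$ (a fixed constant), Bodlaender's algorithm produces such a decomposition in linear time, so the overall procedure remains linear. I would include this brief remark for completeness and then conclude that ISDM is solvable in linear time on bounded tree-width graphs, as required.
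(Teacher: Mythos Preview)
Your proposal is correct and matches the paper's approach exactly: the paper simply states that the result is immediate from Theorems \ref{cmsol1} and \ref{cmsol2}, and you spell out precisely this composition (with the harmless extra remark about Bodlaender's algorithm). Nothing more is needed.
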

\section{Conclusion}
\indent In this paper, it is shown that ISDM problem is NP-complete even when restricted to bipartite graphs, or split graphs. Since split graphs form a proper subclass of chordal graphs, this problem is also NP-complete for chordal graphs. We have shown that ISDM problem is linear time solvable for bounded tree-width graphs. It will be interesting to investigate the algorithmic complexity of ISDM problem for subclasses of chordal and bipartite  graphs. 


\begin{thebibliography}{}
	%
	%
	
	\bibitem{bersto} A.A. Bertossi, {\it Dominating sets for split and bipartite graphs.} Information Processing Letters, \textbf{19} (1984), no. 1, pp.37-40.
	
	
	
	\bibitem{pog} E.J. Cockayne, P.J.P. Grobler, W.R. Grundlingh, J. Munganga, and J.H. van Vuuren, \emph{Protection of a graph}, Utilitas Mathematica, \textbf{67} (2005), pp. 19-32.
		\bibitem{courc} B. Courcelle, \emph{The monadic second-order logic of graphs. I. Recognizable	sets of finite graphs}, Information and Computation \textbf{85(1)} (1990) 64-75.
	
	\bibitem{dev} A.P. De Villiers, \emph{Edge criticality in secure graph domination}, Ph.D. Dissertation Stellenbosch: Stellenbosch University, (2014).
	
	\bibitem{garey} M.R. Garey, D.S. Johnson, \emph{Computers and Intractability: A Guide to the Theory of NP-Completeness}, Freeman, New York, (1979).
	
	
	\bibitem{hamid}  I.S. Hamid, and S. Balamurugan, \emph{Isolate domination in graphs}, Arab Journal of Mathematical Sciences, \textbf{22} (2016), no. 2, pp. 232-241.
	
	\bibitem{Haynes1} T.W. Haynes, and S.T. Hedetniemi, and P. Slater, \emph{Fundamentals of domination in graphs}, CRC Press, (1998).
	
	\bibitem{Haynes2} T.W. Haynes, and S.T. Hedetniemi, and P. Slater, \emph{Domination in graphs: advanced topics}, Marcel Dekker, (1997).
	
	
	
	\bibitem{osd} H.B. Merouane, and M. Chellali, \emph{On secure domination in graphs}, Information Processing Letters, \textbf{1150} (2015), pp. 786-790.
	
	\bibitem{jp1}  J. Pavan Kumar, P. Venkata Subba Reddy and S.Arumugam, \emph{Algoritmic Complexity of Secure Connected Domination in Graphs}, AKCE International Journal of Graphs and Combinatorics, 2019 (\textit{Article in press}).
	\bibitem{rad}  N.J. Rad, \emph{Some notes on the isolate domination in graphs}, AKCE International Journal of Graphs and Combinatorics, \textbf{14} (2017), no. 2, pp. 112-117.
	\bibitem{vraman} V. Raman, and S. Saurabh. \emph{Short cycles make W-hard problems hard: FPT algorithms for W-hard problems in graphs with no short cycles}, Algorithmica \textbf{52.2} (2008), 203-225.
\end{thebibliography}
%
\end{document}